\newtheorem{theorem}{Theorem}
\newtheorem{Proposition}[theorem]{Proposition}
\tikzstyle{sus} = [rectangle, text width=1cm, minimum height=1cm,text centered, draw=black, fill=blue!20]
\tikzstyle{asym} = [rectangle, text width=1cm, minimum height=1cm,text centered, draw=black, fill=orange!40]
\tikzstyle{sym} = [rectangle, text width=1cm, minimum height=1cm,text centered, draw=black, fill=pink!70]
\tikzstyle{rec} = [rectangle, text width=1cm, minimum height=1cm,text centered, draw=black, fill=green!20]
\tikzstyle{exp} = [rectangle, text width=1cm, minimum height=1cm,text centered, draw=black, fill=gray!30]
\tikzstyle{vacc} = [rectangle, text width=1cm, minimum height=1cm,text centered, draw=black, fill=white!50]
\tikzstyle{arrow} = [thick,->,>=stealth]
\title{Assessing the Impact of (Self)-Quarantine Through a Basic Model of Infectious Disease Dynamics}
\author{J\'{o}zsef Z. Farkas, Roxane Chatzopoulos \\ 
Computing Science and Mathematics \\ University of Stirling, Stirling FK9 4LA, UK \\ jozsef.farkas@stir.ac.uk}
\date{\today}
\begin{document}

\newgeometry{left=1.95cm, right=1.95cm, top=2.5cm, bottom=3cm}

\maketitle

\abstract{We introduce a system of differential equations to assess the impact of (self-)quarantine of symptomatic infectious individuals on disease dynamics. To this end we depart from using the classic bilinear infection process, but remain still within the framework of the mass-action assumption. From the mathematical point of view our model is interesting due to the lack of continuous differentiability at disease free steady states, which implies also that the basic reproductive number cannot be computed following established approaches for certain parameter values. However, we parametrise our mathematical model using published values from the COVID-19 literature, and analyse the model simulations. We also contrast model simulations against publicly available COVID-19 test data focusing on the first wave of the pandemic during March - July 2020 in the UK. Our simulations indicate that actual peak case numbers might have been as much as 200 times higher than the reported positive test cases during the first wave in the UK. We find that very strong adherence to self-quarantine rules yields (only) a reduction of 22$\%$ of peak numbers and delays the onset of the peak by approximately 30-35 days. However, during the early phase of the outbreak the impact of (self)-quarantine is much more significant. We also take into account the effect of a national lockdown in a simplistic way by reducing the effective susceptible population size. We find that in case of a 90$\%$ reduction of the effective susceptible population size, strong adherence to self-quarantine still only yields a 25$\%$ reduction of peak infectious numbers when compared to low adherence. This is due to the significant number of asymptomatic infectious individuals in the population.}

\section{Introduction}

The recent COVID-19 outbreak has reinvigorated interest in the use of mathematical models by the infectious disease community, i.e., to make predictions and assess various hypotheses surrounding infectious disease dynamics (in particular COVID-19). One important question that has been investigated is the role and proportion of asymptomatic infected individuals in the population, with regard to disease dynamics. This question is particularly important when studied in relation to assessing the impact of (self)-quarantine, as clearly without readily available mass testing (as was the case for COVID-19 in early 2020 around most parts of the world), asymptomatic individuals may continue to mix in the population, and only symptomatic infectious individuals will/may self-quarantine. 

Despite the reinvigorated interest in mathematical models of infectious disease dynamics, very few authors ventured to use some novel models, in particular when it comes to the modelling of the infection process itself; that is how new infections arise from the contact between susceptible and infected individuals. Apparently almost all recent COVID-19 modelling work uses the classic bilinear infection process, see, e.g., \cite{Danon,Ghostine2021,Webb2021-JTB,Webb2020-MBE,Webb2020-IDM,Mwalili2020}; which dates back to Kermack and McKendrick \cite{Kermack1927,Kermack1932}. The idea behind the classic $\lambda\,SI$ infection term in the early works of Kermack and McKendrick is the so-called mass-action assumption, s\mbox{ee e.g., \cite{Diekmann-Heesterbeek,Heesterbeek}.} It is hypothesized that every susceptible individual has an equal chance to meet with any of the infected individuals, hence the number of possible contacts (\mbox{per unit time}) of a single susceptible individual is proportional to $I$, which is a linearly increasing function of the total infected population size $I$. However, it has been recognised for a while that the contact network of the population is not homogeneous; still, it can be shown that in large enough populations with sufficient mixing, ODE systems may provide good enough average information of the real (stochastic and individual based) process, see, e.g., \cite{Simon,Sharkey}. Another major weakness of the classic $SIR$ model is that its (long-term) behaviour only depends on the initial number of susceptible individuals, and it is independent of the infectious ones. 
It is worthwhile to note that the mass-action assumption originates from studying chemical reaction networks and indeed it has been criticised, see, e.g., \cite{Heesterbeek} (also for a complete historical overview). In particular it is acknowledged (see, e.g., \cite{Heesterbeek}) that the forceful paradigm of using a simple bilinear infection term of the form $\lambda\, SI$ is not necessarily justified from the modelling point of view and its widespread use is rather due to its simplicity.

On the other hand, one easily finds models in the mathematical literature, which contain different nonlinear infection terms. One of the earliest of such papers is by \cite{Capasso1978}, but see also the much more recent ones \cite{Farkas, Korobeinikov2006}. 
There are clear arguments that, in many situations, it will be much more realistic to use an infection term of the form $S\,g(I)$, where, for example, $g$ saturates for large values of $I$. For example it is clear that during a pandemic as the number of infected individuals increase in the population susceptibles will naturally reduce the number of their contacts inducing a saturation effect in $g$. In the next section, we will introduce a mathematical model incorporating a new nonlinear infection process; that is the number of contacts will not be a linearly increasing function of the infectious population size. Partially motivated by the recent COVID-19 pandemic, we focus on two key \mbox{aspects: }

\begin{enumerate}
\item The impact of asymptomatic infectious individuals on disease dynamics.
\item The impact of (self)-quarantine on disease dynamics.
\end{enumerate} 

While the first issue can be simply studied by separating the group of infectious individuals into two compartments; the second issue, we argue, can be handled via the introduction of a novel infection term. The key finding we obtain using our novel mathematical model is that the impact of (self)-quarantine is significant only during the very early phase of an outbreak, and its impact significantly diminishes over time. We argue that this effect is due to the presence of a significant number of asymptomatic infectious individuals. This finding is rather interesting in light of the transmission term we use, as the saturation effect in disease transmission increases with the population density of infectives in our model. From the practical point of view, our finding is important in light of the extensive government measures introduced in response to COVID-19 in the UK. We also contrast our model simulation outputs against publicly available COVID-19 test data, focusing on the first wave of the pandemic in March--July 2020 in \mbox{the United Kingdom.} 

The number of recently published papers using systems of differential equations to model COVID-19 disease dynamics is enormous. Attempts have been made to design models, which separate individuals who are known to be infected from those that are untested/unreported, see, e.g., \cite{Webb2020-MBE,Webb2021-JTB} and also \cite{Webb2020-IDM}. It is worthwhile to note though that interestingly enough in \cite{Webb2020-MBE,Webb2021-JTB,Webb2020-IDM} it is assumed that the transmission rate for symptomatic and asymptomatic individuals are the same. Early on in the pandemic, it was also hypothesized that environmental transmission was significant, see, e.g., \cite{Mwalili2020}; however, later studies confirmed that environmental variations cannot explain varying transmission rates see \mbox{e.g., \cite{Poirier2020}}. What most researchers would agree on is that quarantine of symptomatic individuals may play a significant role in infectious disease dynamics, in particular in conjunction with significantly different transmission rates for asymptomatic versus symptomatic individuals (see, e.g., \cite{Ferguson}); and these are the two issues we will focus on here.

\section{Materials and methods}

Our goal is to introduce and study a basic compartmental model, which can be considered as an extension/modification of a classic $SEIR$ model in two major directions: 
\begin{enumerate}
\item We introduce a compartment for the asymptomatic infectious individuals.
\item We model the effect of (self)-quarantine via the introduction of a new nonlinear (sublinear to be precise) transmission term.
\end{enumerate} 
Thus, in our mathematical model, at any given time, individuals will belong to one of the following 5 compartments.
\begin{itemize}
    \item $S$---susceptible;
    \item $E$---exposed;
    \item $A$---asymptomatic infectious; 
    \item $I$---symptomatic infectious;
    \item $R$---recovered/removed.
\end{itemize} 

To arrive at a tractable mathematical model (and simultaneously to remain relatively close in nature to some of the existing compartmental models, for comparison purposes) we impose the following assumptions:
\begin{enumerate}
    \item Newly infected individuals all enter the exposed compartment first due to a latency period. It is clear from the studies that, for COVID-19, there is a significant incubation period of approximately $7.73$ days; see \cite{Qin}. 
    \item Exposed individuals spend on average in the $E$ compartment $\alpha^{-1}$ time units (where, e.g., for COVID-19 $\alpha^{-1}=7.73$ days, according to \cite{Qin}), after which a  proportion: $0<p<1$ of them becomes symptomatic infectious, while the remaining individuals become asymptomatic infectious. 
    \item Both asymptomatic and symptomatic individuals may pass on the disease, but naturally (as is the case for all influenza-like diseases, spread mainly by droplets, e.g., COVID-19) the transmission rate is significantly higher for symptomatic individuals: $\beta_I$, when compared to the transmission rate for asymptomatic infected individuals: $\beta_A$; that is $0<\beta_A\ll \beta_I$; still asymptomatic infected individuals may also pass on the disease (for example via very close contact).
    \item We take into account the effect of (self)-quarantine; that is, we assume that at any given time a subset of the  symptomatic infected individuals (self)-quarantine (this is supported by reports, e.g., during the COVID-19 pandemic, see, e.g., \cite{Bodas,Smith}). We use a power law to model this, instead of using, for example, a fixed proportion, as it is known that (online) human interaction activity will impact adherence rates, and such activity can be often approximated by power laws (see, e.g., \cite{Muchnik, Rybski}). Moreover, it may be natural to assume that the impact of quarantine will be more significant for relatively higher infectious population densities. Thus, in combination with the classic mass action assumption, we propose to use an infection term of the following form:
    $$\beta_A\, S(t)A(t)+\beta_I\, S(t)(I(t))^\kappa,\quad 0\le \kappa\le 1.$$
Note that, for smaller values of $\kappa$, the transmission (for example contact) between susceptibles and symptomatic infectious individuals is reduced. In particular, $\kappa=0$ would correspond to a complete quarantine of the symptomatic infectious population.    
    \item We do not incorporate population dynamics since we want to focus on the disease dynamics over a short period of time here (that is, for COVID-19, a 4-month period during March--July 2020).  Moreover, we do not explicitly incorporate disease-induced mortality into our model, although it can be understood that deceased people have entered the removed compartment.
\end{enumerate}

The assumptions above allow us to introduce the following basic model

\begin{equation}\label{model}
\begin{aligned}
    S'(t)= & -\beta_A\, S(t)A(t)-\beta_I\, S(t)(I(t))^\kappa \\
    E'(t)= & \beta_A\, S(t)A(t)+\beta_I\, S(t)(I(t))^\kappa-\alpha\,E(t) \\
    A'(t)= & (1-p)\,\alpha\,E(t)-\gamma_A\, A(t) \\
    I'(t)= &  p\,\alpha\,E(t)-\gamma_I\, I(t)\\
    R'(t)= & \gamma_A\,A(t) +\gamma_I\,I(t)
\end{aligned}
\end{equation}
naturally quipped with an initial condition of the form
\begin{equation*}
    \left(S(0),E(0),A(0),I(0),R(0)\right):=\left(S_0,E_0,A_0,I_0,R_0\right)\in\mathbb{R}^5_+,\quad S_0+E_0+A_0+I_0+R_0=N.
    \end{equation*}

The compartmental model \eqref{model} above can be considered an extension of the classic Kermack--McKendrick $SIR$ model (\cite{Kermack1927,Kermack1932}), in two major directions: we introduce an asymptomatic infectious compartment, and we model the effect of (self)-quarantine via the introduction of a nonlinear infection process. 
We note that the model above can be considered a starting point of departure from the classic Kermack--McKendrick SEIR 
model, and there are various natural modifications (such as the introduction of time-dependent parameters), which would make it widely  applicable; we discuss some of these extensions in the last section.
For the readers' convenience, we present a simple diagram below describing our model.

\begin{equation*}
\begin{tikzpicture}[node distance=2cm]
<TikZ code>
\node (susceptibles) [sus] {S};
\node (exposed) [exp, right of=susceptibles, xshift=2cm] {E};
\node (symptomatic) [sym, above right of=exposed, xshift=2cm] {I};
\node (asymptomatic) [asym, below right of=exposed, xshift=2cm] {A};
\node (recovered) [rec, above right of=asymptomatic, xshift=2cm] {R};

\draw [arrow] (susceptibles) -- node[anchor=south] {$(\beta_A,\beta_I)$} (exposed);
\draw [arrow] (exposed) -- node[anchor=south] {$p\,\alpha$} (symptomatic);
\draw [arrow] (exposed) -- node[anchor=north] {$(1-p)\,\alpha $} (asymptomatic);
\draw [arrow] (symptomatic) -- node[pos=0.5,above] {$\gamma_I$} (recovered);
\draw [arrow] (asymptomatic) -- node[pos=0.5,below] {$\gamma_A $} (recovered);

\end{tikzpicture}
\end{equation*}
    
Note that (because of assumption 5 above) we have
\begin{equation*}
\frac{d}{dt}\left(S(t)+E(t)+A(t)+I(t)+R(t)\right)=0,\quad \forall t>0,
\end{equation*}
that is the total population size (denoted by $N$ below) is preserved
\begin{equation*}
S(t)+E(t)+A(t)+I(t)+R(t)=N>0, \quad \forall t\ge 0.
\end{equation*}

Due to the lack of Lipschitz continuity at $I=0$, we establish existence and uniqueness (and positivity) of solutions (when $I\neq 0$) below.

\begin{Proposition}
For every initial condition $(S(0),E(0),A(0),I(0)\neq 0,R(0))\in \mathbb{R}_+^5$ model \eqref{model} admits a unique solution, which remains non-negative for all times.
 \end{Proposition}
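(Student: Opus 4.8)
The only feature of the right-hand side $F$ of \eqref{model} that obstructs a direct appeal to the classical Cauchy--Lipschitz theory is the term $S\,I^\kappa$: for $\kappa<1$ the map $I\mapsto I^\kappa$ fails to be locally Lipschitz at $I=0$ (and for $\kappa=0$ it is not even continuous there), whereas on $\{I>0\}$ it is $C^\infty$. Accordingly the plan is to work on the open half-space $\Omega:=\{(S,E,A,I,R)\in\mathbb R^5:I>0\}$, on which every component of $F$ is a polynomial or a polynomial times $I^\kappa$, so $F\in C^1(\Omega)$ and is locally Lipschitz there. Picard--Lindel\"{o}f then yields, for each datum $(S_0,E_0,A_0,I_0,R_0)\in\mathbb R^5_+$ with $I_0>0$, a unique solution on a maximal interval $[0,T_{\max})$, with the standard alternative that either $T_{\max}=\infty$ or the solution eventually leaves every compact subset of $\Omega$. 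It then remains to show (ii) the orbit stays non-negative and never leaves $\Omega$, and (iii) that there is no finite-time blow-up. (For $\kappa=1$ the field is a polynomial on all of $\mathbb R^5$ and the statement is routine.)

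For (ii) I would first integrate the $S$-equation to get $S(t)=S_0\exp\!\big(-\int_0^t(\beta_A A+\beta_I I^\kappa)\,ds\big)\ge 0$, so $S$ is non-negative wherever defined, and then use that $F$ is \emph{quasi-positive} on $\mathbb R^5_+\cap\Omega$: if $x\in\mathbb R^5_+\cap\Omega$ has $x_j=0$ for some $j\in\{S,E,A,R\}$, then the $j$-th component of $F(x)$ is $\ge 0$, since there $S'=0$, $E'=\beta_A SA+\beta_I SI^\kappa\ge 0$, $A'=(1-p)\alpha E\ge 0$ and $R'=\gamma_A A+\gamma_I I\ge 0$. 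By the classical invariance (Nagumo) theorem for locally Lipschitz quasi-positive fields and the convex set $\mathbb R^5_+\cap\Omega$ (closed in $\Omega$), the orbit stays in $\mathbb R^5_+$ as long as it exists; in particular $E(t)\ge 0$ on $[0,T_{\max})$. Substituting this into the $I$-equation gives $I'(t)=p\,\alpha\,E(t)-\gamma_I I(t)\ge-\gamma_I I(t)$, hence $I(t)\ge I_0\,e^{-\gamma_I t}>0$ for all $t\in[0,T_{\max})$; so the orbit stays bounded away from $\{I=0\}$, never escapes $\Omega$, and the local uniqueness above is never compromised.

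For (iii) I would sum the five equations, obtaining $\frac{d}{dt}\big(S+E+A+I+R\big)\equiv 0$, so $S+E+A+I+R\equiv N$ on $[0,T_{\max})$; with non-negativity this traps the orbit in the compact simplex $\{x\in\mathbb R^5_+:\sum_i x_i=N\}$. If $T_{\max}<\infty$ then in addition $I(t)\ge I_0\,e^{-\gamma_I T_{\max}}=:\delta>0$ throughout, so the orbit actually remains in the compact set $\{x\in\mathbb R^5_+:\sum_i x_i=N,\ I\ge\delta\}\subset\Omega$, contradicting the escape alternative; hence $T_{\max}=\infty$ and the unique solution is global and non-negative. The step I expect to be the main obstacle is (ii) --- showing the orbit can never reach $\{I=0\}$ in forward time, this being exactly where uniqueness may fail --- and the delicate point there is the ordering: one must first secure $E\ge 0$ via quasi-positivity before the differential inequality $I'\ge-\gamma_I I$ can be invoked to bound $I$ from below. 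No analogous conclusion should be expected in backward time, where the orbit may run into $\{I=0\}$.
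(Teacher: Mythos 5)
Your argument is correct, and its local part coincides with the paper's: both restrict to the open set where $I\neq 0$, on which the right-hand side is locally Lipschitz, apply Picard--Lindel\"{o}f, and obtain non-negativity from quasi-positivity of the field on the boundary faces $S=0$, $E=0$, $A=0$, $R=0$ (Nagumo). Where you genuinely diverge is the global-existence step. The paper invokes a linear-growth criterion (Corollary 2.5.3 in Pr\"{u}ss--Wilke), checking $\left\langle H({\bf w}),{\bf w}\right\rangle\le \omega\,\|{\bf w}\|_2^2$ with $\omega=\max\{\beta_A,\beta_I,\alpha,\gamma_A,\gamma_I\}N^2$, which rules out blow-up of the norm. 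You instead use the conservation law $S+E+A+I+R\equiv N$ together with the lower bound $I(t)\ge I_0e^{-\gamma_I t}>0$ (deduced, in the right order, from $E\ge 0$ and $I'\ge-\gamma_I I$) to trap the orbit in a compact subset of $\{I>0\}$ on any finite time interval, so the maximal solution cannot escape in finite time. Your route buys two things: it is self-contained once the conserved total population is observed, and it explicitly shows the forward orbit never reaches $\{I=0\}$ --- precisely the set where Lipschitz continuity, and hence uniqueness, could be lost --- a point the paper's norm estimate does not by itself address, since controlling $\|{\bf u}(t)\|$ does not control the distance to the boundary of $\mathbb{O}$. The paper's approach, in turn, is shorter and does not rely on the simplex being invariant, so it would carry over to variants of the model (demography, disease-induced mortality) where the total population is no longer conserved and your compactness argument would need replacing by an a priori bound. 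Your closing remark that no such conclusion holds backward in time is also accurate.
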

\begin{proof}[Proof of Proposition 1]
Let us introduce the notation ${\bf u}(t)=(S(t),E(t),A(t),I(t),R(t))^t$ and rewrite model \eqref{model} in matrix form as 
\begin{equation}
{\bf u}'(t)= H({\bf u}(t)),\quad {\bf u}(0)=:{\bf u}_0\in\mathbb{R}^5,
\end{equation}
where we define 
\begin{equation}
H({\bf u}):=\begin{pmatrix}
H_1({\bf u})= & -\beta_A\,SA-\beta_I\, SI^\kappa \\
H_2({\bf u})= & \beta_A\, SA+\beta_I SI^\kappa-\alpha\, E \\
H_3({\bf u})= &(1-p)\alpha\, E-\gamma_A\, A \\
H_4({\bf u})= & p\alpha\, E-\gamma_I\,I \\
H_5({\bf u})= &\gamma_I\, I+\gamma_A\, A
\end{pmatrix}.
\end{equation}

Next, we note that $H$ is locally Lipschitz continuous on the open set 
\begin{equation*}
\mathbb{O}:=\mathbb{R}^5\setminus \{(S,E,A,0,R)\in\mathbb{R}^5\},
\end{equation*} 
hence for any ${\bf u}(0)\in\mathbb{O}$ a unique local solution of \eqref{model} exist by the Picard-Lindel\"{o}f theorem. 
To show that solutions exist globally, it is sufficient to show (see, e.g., Corollary 2.5.3 in \cite{Pruss2010}) that there exists a constant $\omega\ge 0$, such that 
\begin{equation}\label{global-criterion}
\left\langle H({\bf w}),{\bf w}\right\rangle\le \omega\, ||{\bf w}||_2^2,
\end{equation}
for all ${\bf w}\in\mathbb{O}$. Above in \eqref{global-criterion} $\langle \cdot,\cdot\rangle$ stands for the usual inner product on $\mathbb{R}^5$, and $||\cdot||_2$ for the standard Euclidean norm on $\mathbb{R}^5$. We have 
\begin{equation}
\begin{aligned}
\left\langle H({\bf w}),{\bf w}\right\rangle = & (-\beta_A\,SA-\beta_I\, SI^\kappa)S+(\beta_A\, SA+\beta_I SI^\kappa-\alpha\, E)E \\
& +((1-p)\alpha\, E-\gamma_A\, A)A+(p\alpha\, E-\gamma_I\,I)I+(\gamma_I\, I+\gamma_A\, A)R \\
\le & \max\{\beta_A,\beta_I,\alpha, \gamma_A,\gamma_I\}N^2\,(S^2+E^2+A^2+I^2+R^2) \\
= & \max\{\beta_A,\beta_I,\alpha, \gamma_A,\gamma_I\}N^2\,||{\bf w}||_2^2,
\end{aligned}
\end{equation}
using that we have $S,E,A,I,R\le N$. Finally we note that we have 
\begin{equation}
H_1({\bf w})|_{S=0}=0,\,\, H_2({\bf w})|_{E=0}\ge 0,\,\, H_3({\bf w})|_{A=0}\ge 0,\,\, H_4({\bf w})|_{I=0}\ge 0,\,\, H_5({\bf w})|_{R=0}\ge 0, 
\end{equation}
hence, solutions starting in $\mathbb{O}_+$ will remain non-negative for all times, see \cite{Nagumo1942}.
\end{proof}

With regards to further mathematical properties we note that model \eqref{model} admits a family of steady states of the form: 
\begin{equation*}
\left(S_*,0,0,0,R_*\right),\quad S_*+R_*=N,\quad S_*,R_*\ge 0,
\end{equation*} 
(a line in the phase space). On the other hand, from the mathematical point of view it is worthwhile to note that the right hand side of model \eqref{model} is not Lipschitz continuous when $I=0$, for $0<\kappa<1$; hence, for example, one cannot linearise the model at its non-trivial steady states for these $\kappa$ values. Note that though steady states clearly exist, as only uniqueness of solutions may be lost when $I=0$, but clearly we are not interested in such solutions. We make it clear that we only intend to use model \eqref{model} over a relatively short time-scale, e.g., during the first wave of the UK pandemic during March--July 2020 (after which, all compartmental models tend to deviate from the real process, as they are rather simplistic); hence, asymptotic properties of the model are irrelevant; although we refer the interested reader to section 4 for more details on qualitative properties of \mbox{model \eqref{model}} for $\kappa=1$, and in particular in comparison to the classic Kermack--McKendrick model.

\section{Results}

In this section, we parametrise our model and analyse model simulations, focusing on the first wave of COVID-19 (pre Delta variant) in the UK during March--July of 2020.  In particular, to parametrise our model \eqref{model}, we searched the COVID-19 literature to find some realistic values, although it became quickly clear that there are significant variances of estimates one finds in different studies. In our model \eqref{model}, time is measured in days, and as already previously mentioned, we take an incubation period of $7.73$ days, which yields $\alpha=0.129$. For the total UK population, we used the 2020 estimate of $N=67,081,000$ by the Office for National Statistics (see \cite{ONS}). It is worthwhile to note that it is very difficult to find any reliable estimates for transmission probabilities for COVID-19. On the other hand, there are a lot of published studies estimating the difference between transmission rates of asymptomatic versus symptomatic infectious individuals. For example, \cite{Ferguson} estimates that symptomatic individuals will have $50\%$ higher transmission rates. Therefore, in our simulations, initially, we set $\beta_A=2 N^{-1}$ and $\beta_I=3 N^{-1}$. Note that, for example, $\beta_A$ is the per (susceptible) capita contact rate per unit of time (day), times the transmission probability. So, for example, the value we choose for $\beta_A$ would correspond to $5$ contacts per unit time (day), on average, with a transmission probability of $0.4$ (upon contact). Naturally one would assume that the transmission probability is much higher for symptomatic infectious individuals. However, the average number of contacts might be reduced; hence, the $50\%$ higher transmission rate for symptomatic individuals (also used in \cite{Ferguson}). Several studies (see, e.g., \cite{Bullard,vanKampen}) reported no or low detectable virus levels 8 days from the onset of symptoms; hence, we take the recovery rate $\gamma_I=0.125$. Since there is no reliable evidence that the recovery rate would be different for asymptomatic cases, we take the same value for $\gamma_A$. There are a number of studies with widely varying figures ($1\%$ to $78\%$)  concerning the proportion of asymptomatic versus symptomatic cases. This is not surprising as asymptomatic individuals are often not tested; hence, it is very difficult to obtain reliable estimates for their numbers. A recent study \cite{Alene} reported that around the third of cases are asymptomatic; hence, we take $p=0.66$ as a baseline figure for our simulation setups. Note that, in our model, $p$ is the probability to become symptomatic, and not exactly the proportion of symptomatic individuals among the infected ones at any given time (which will also depend on the initial conditions used), we will discuss this further later on when interpreting the simulation results. The parameter $\kappa$ measures adherence levels to quarantine and, hence, the reduction of symptomatic transmission of the disease, which we are going to vary to assess its impact in early and late stages of the first wave of the pandemic in the UK.  Note that a low value of $\kappa$ means strong adherence to self-quarantine rules. It is also important to  note that $\kappa$ (naturally) is a dimensionless (scalar) quantity.

We contrast our model simulation results against publicly available data in \cite{World}, focusing on the first wave (pre Delta variant), March--July 2020 of the COVID-19 pandemic in the UK. The main issue with most of the publicly available COVID-19 data we found is that usually only the number of new positive tests and deaths were reported on a daily basis. However, our simulation output is the total number of currently infectious symptomatic and asymptomatic individuals, which does not allow for a direct comparison. It is clear that there is even more ambiguity when we take into account that, for anyone tested (and reported) positive, it is not known when exactly they acquired the infection; that is, there is a variation of the days elapsed from the onset of symptoms until a positive test is recorded. It was not our focus here to perform a rigorous data analysis. Instead, to estimate the number of current infectious individuals for a given day, we took a backward time window and added the total number of new cases reported, and subtracted the number of reported deaths over a two-sided time window spanning 19 days. The value 18 to 19 days from diagnosis to death was reported in a recent study from Australia \cite{Marschner}, and we do not have any reason to believe that this would be significantly different for the UK.

\begin{figure}[H]
\begin{centering}
\includegraphics[width=15.5 cm, height=12.5cm]{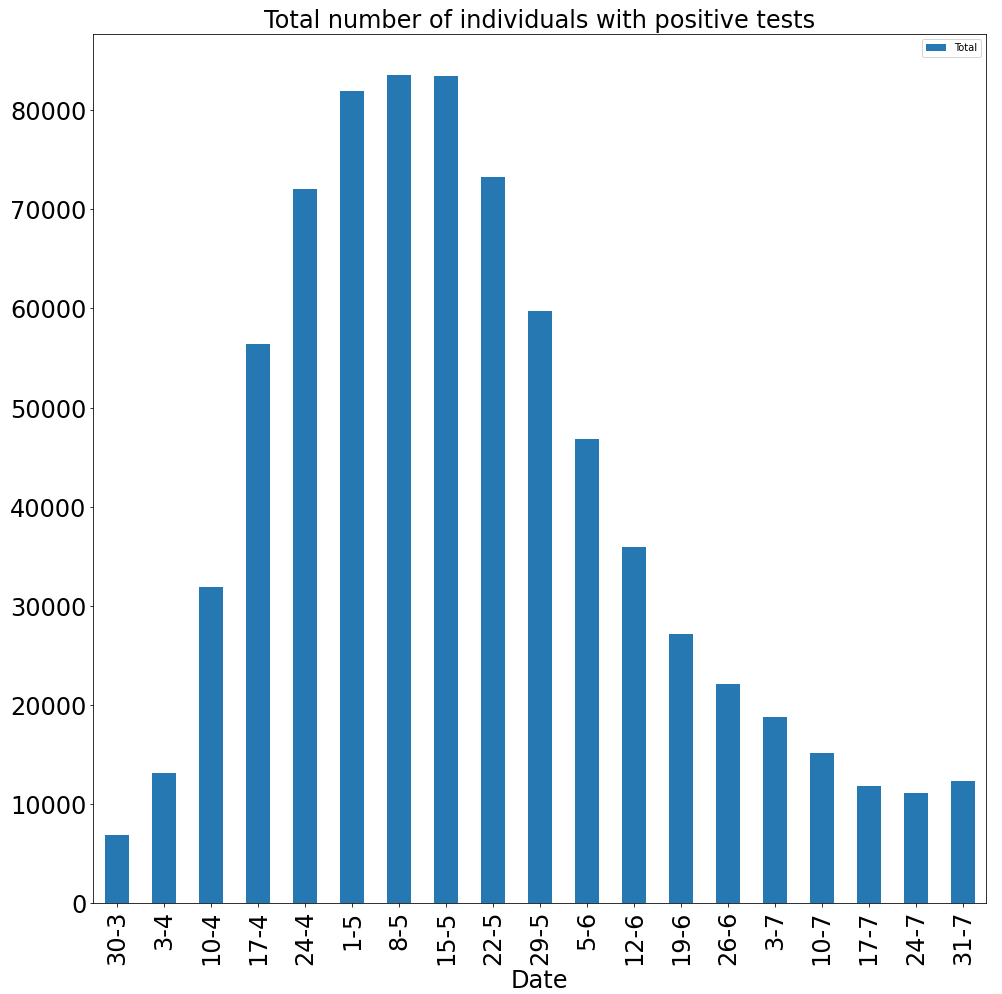}
\end{centering}
\caption{Estimate for the total number of individuals with positive tests at weekly time instances during March - July 2020 in the UK, obtained from the data sources in \cite{World}. \label{data}}
\end{figure}

It is clear that, due to insufficient testing during the first wave (March-July 2020) in the UK, the actual number of infectious  (asymptomatic and symptomatic) individuals was likely to be much higher than the test data indicates (this is also confirmed by the volume of positive tests recorded during the second wave, as much as 15 times higher, when testing was readily available). Our simulations in fact indicate that the actual number of cases could have been as much as 200 times of that reported via testing (see Figure \ref{fig-setup1-total} below). It is also clear that the significant national lockdown significantly reduced the total number of cases. We are going to investigate this by reducing the "available" (effective) susceptible population size significantly in Setup 1---reduced $S(0)$, see below.  The shape of the curves (note that 1 March corresponds to time 0 in our simulations) indicate very good agreement both for Setup 1 and Setup 1 with reduced susceptibles, see, e.g., Figure \ref{fig-setup1-total} below. However, the peak number of individuals in our simulation Setup 1 is around 85 to 90 days, which coincides with the peak number of the data based on positive tests. In contrast, in Setup 1, with reduced susceptibles, the peak appears to be at 75 days, which may suggest that despite the national lockdown during March--June 2020, the possible number of susceptibles coincided with the actual population size. This could be explained with the high density network structure coupled with the fact that significant layers of the population were still working and mixing, etc.

\begin{figure}[H]
\includegraphics[width=13.5 cm]{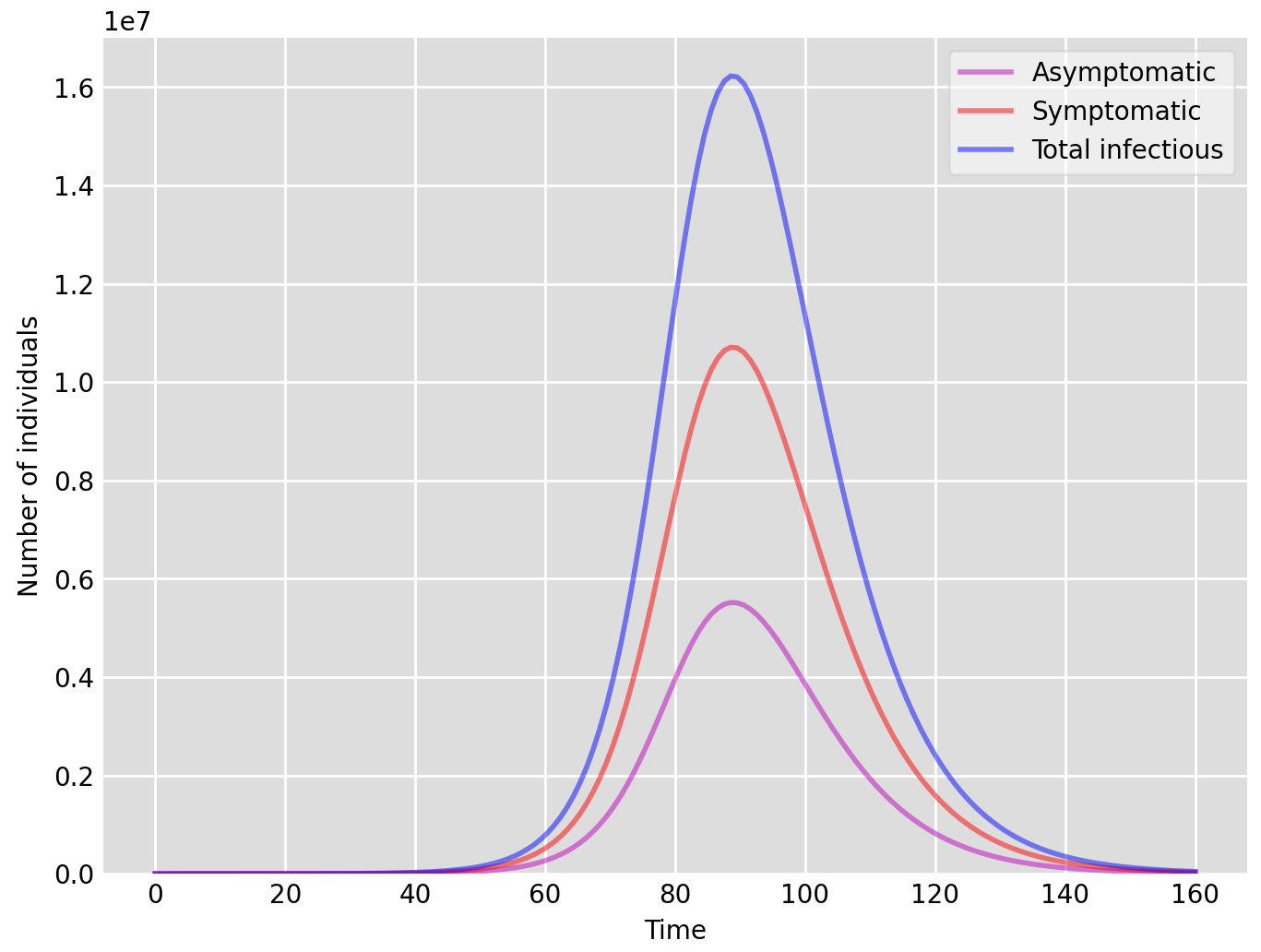}
\caption{Setup 1 - model simulation results for the total number of individuals $A(t),I(t)$  and $A(t)+I(t)$. \label{fig-setup1-total}}
\end{figure}   

Below, we present some key model simulation results. We used discretisation to produce solutions of model \eqref{model} in Python. In Setup 1 and 2, the initial condition we used consisted of the whole susceptible UK population and a very small number of exposed, asymptomatic infectious and symptomatic infectious individuals ($50,\, 10$ and $40$, respectively). In Setup 1 and 2---reduced $S(0)$, we reduced the initial susceptible population by $90\%$, and changed, accordingly, the parameter values $\beta_A$ and $\beta_I$, while the other parameters were kept the same; see Table \ref{tab1} below. This means that, we hypothesize that, in reality, there is an effective susceptible population (which in the simulation is $10\%$ of the total population), and in fact a significant proportion of susceptible individuals are isolated. From the practical point of view, the reduction of the susceptible population could be a result of a number of factors, for example a national lockdown, shielding, self-isolation, etc.

\begin{table}[H] 
\caption{We summarise the model parameters used for the simulations in Python.\label{tab1}}
\begin{tabular}{l|c|c|c|c|c|c|c}
\textbf{Model parameters}	& $\beta_A$	& $\beta_I$ & $\gamma_A$ & $\gamma_I$ & $\alpha$  & $p$ & $\kappa$ \\
\hline
\hline
Setup 1		&  $\frac{2}{67081000}$ & $\frac{3}{67081000}$ &  $0.125$  & $0.125$  & $0.129$  & $0.66$ & $0.1$ \\
\hline
Setup 2		&  $\frac{2}{67081000}$ & $\frac{3}{67081000}$ &  $0.125$  & $0.125$  & $0.129$  & $0.66$ & $0.9$ \\
\hline
Setup 1 - reduced $S(0)$		&  $\frac{2}{6708100}$ & $\frac{3}{6708100}$ &  $0.125$  & $0.125$  & $0.129$  & $0.66$ & $0.1$ \\
\hline
Setup 2 - reduced $S(0)$		&  $\frac{2}{6708100}$ & $\frac{3}{6708100}$ &  $0.125$  & $0.125$  & $0.129$  & $0.66$ & $0.9$ \\
\end{tabular}
\end{table}

We highlight some of our key findings from the model simulations, and in light of the test data, as follows:

\begin{enumerate}
\item Setup 1 ($\kappa=0.1$) simulation results indicate very good agreement between the dates of the peak number of positive tests from the data and model simulation output for the total number of cases. In particular, test data indicate a peak at around 63 to 77~days (see Figure \ref{data}), while simulation outputs show infectious case numbers peaking at around 85 days (see Figure \ref{fig-setup1-total}), and this is what one naturally expects as someone who tests positive may remain infectious for a further 5 to 10 days. While the  wave takes place over a much shorter timescale in Setup 2, infectious individual numbers peak at around days 50--55 (see Figure \ref{fig-setup2-total}), quarantining of a significant proportion of symptomatic individuals obviously slows down the spread of the disease. Hence, in principle, our simulation outputs would indicate that symptomatic infectious individuals adhered to self-quarantine rules during the first wave of the pandemic in the UK (indicated by the agreement with Setup 1).
\item Comparing the peak number of cases in Setup 1 of $1.64\times10^7$ (see Figure \ref{fig-setup1-total}) to that against Setup 2, which is $2\times10^7$ (see Figure \ref{fig-setup2-total}), we can see that a reduction of approximately $22\%$ of peak numbers was achieved by moving from a very loose \mbox{$(\kappa=0.9$)} to a very strong ($\kappa=0.1$) adherence to self-quarantine of symptomatic infectious individuals. Keeping in mind that we assumed that (a significant) two-thirds of exposed individuals become symptomatic, in our opinion, this is not a drastic change as one might hope for. Indeed, if for example only a third of exposed individuals become symptomatic, then our model \eqref{model} predicts that the impact of self-quarantine of symptomatic infectious individuals on peak case numbers would be negligible. However, self-quarantine has a much more significant impact during the early phase of the outbreak, which is due to the delay of the onset of the peak.
\item Based on the literature, we choose the parameter value $p=0.66$, meaning that, on average, two-thirds of the exposed individuals become symptomatically infectious. In both Setup 1 and 2, we observe that $\frac{A(t)}{I(t)}$ tends to the same constant $0.51\dot{5}\dot{1}$ as time goes to infinity, which is realistic, as this should be specific to the disease. Note that $0.51\dot{5}\dot{1}=\frac{0.34}{0.66}$ as one would expect, and simulations show that the ratio $\frac{A(t)}{I(t)}$ stabilizes very quickly. 
\item  When comparing Figures \ref{Setup1-proportions} and  \ref{Setup2-proportions}, we can see that the proportions $\frac{E(t)}{A(t)}$ are drastically different; this is due to the increased infection pressure in Setup 2, with low adherence to self-quarantine of symptomatic infectious individuals.
\item Comparing changes (Setup 1 to Setup 1 reduced $S(0)$ vs. Setup 2 to Setup 2 reduced $S(0)$), we can conclude that, in both cases, naturally ,there is approximately a $90\%$ reduction in the total number of infectious cases ($A(t)+I(t)$), due to the $90\%$ reduction in $S(0)$, the effective susceptible population, (compare Figure \ref{fig-setup1-total} to Figure \ref{fig-setup1reduced-total}). Infectious individual number peaks are also shifted (to an earlier date) by 13 days vs. 10 days in Setup 1 vs. Setup 2 (compare Figure \ref{fig-setup1-total} to Figure \ref{fig-setup1reduced-total} and Figure \ref{fig-setup2-total} to \mbox{Figure \ref{fig-setup2reduced-total}).} Similarly, the reduction in the peak number of cases from Setup 1 reduced $S(0)$ to that of Setup 2 reduced $S(0)$ is approximately $25\%$ (compare Figure \ref{fig-setup1reduced-total} to Figure \ref{fig-setup2reduced-total}), which is comparable to the reduction of $22\%$ from Setup 1 to Setup 2. Importantly, these simulation results may indicate that there is no significant combined effect of a national lockdown (modelled by the reduced effective susceptible population) and strong adherence to self-quarantine rules for symptomatic infectious individuals, when there is a significant proportion of asymptomatic infectious individuals ($\approx$34$\%$) and no mass testing, as was the case during the first wave of the pandemic in the UK. It is also clear from the simulations that all of the measures (e.g., national lockdown, (self)-quarantine) prolong the pandemic. 
\end{enumerate}

\begin{figure}[H]
\includegraphics[width=13.5 cm]{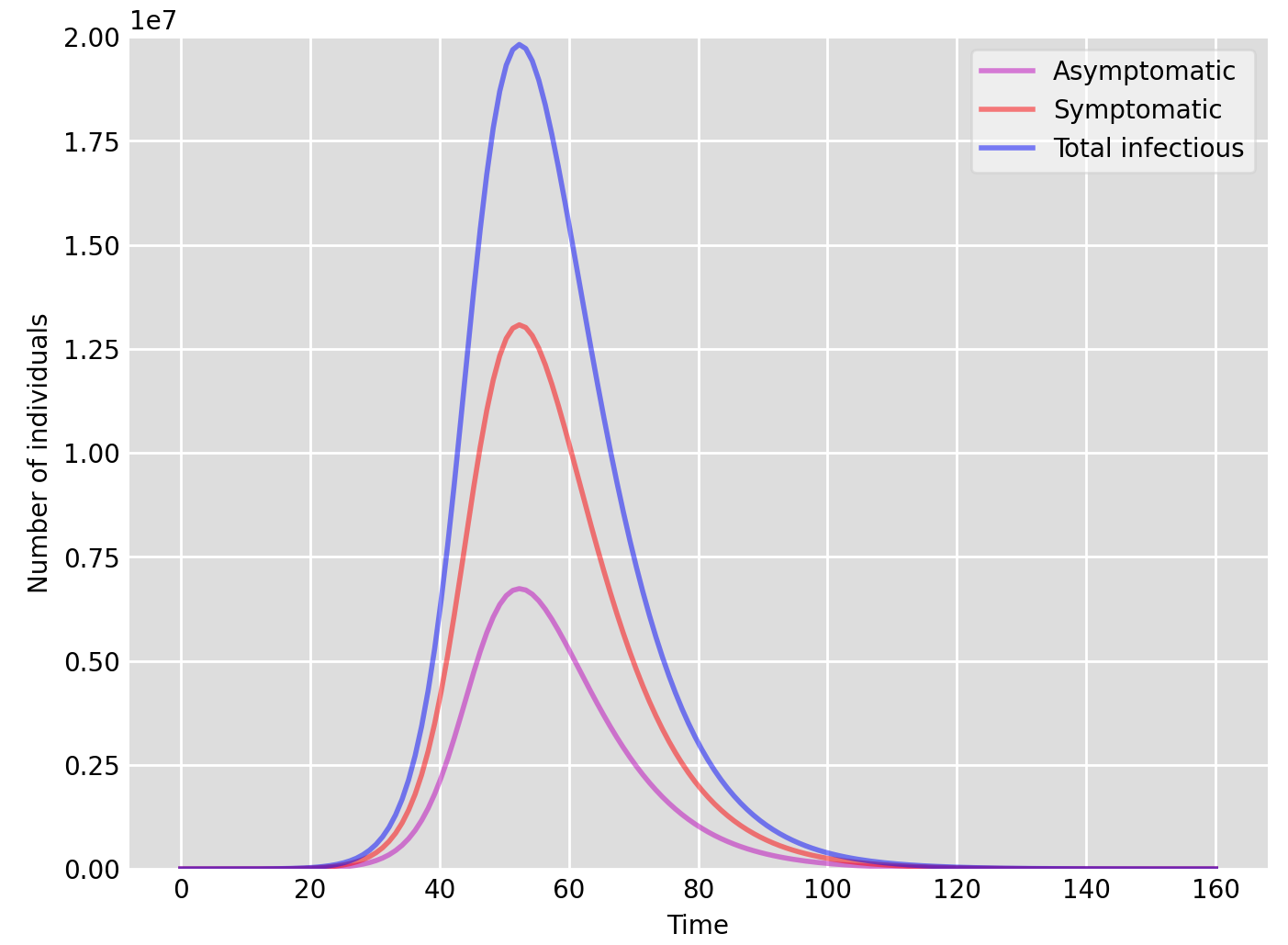}
\caption{Setup 2 - model simulation results for the total number of individuals $A(t),I(t)$  and $A(t)+I(t)$. \label{fig-setup2-total}}
\end{figure}

\begin{figure}[H]
\includegraphics[width=13.5 cm]{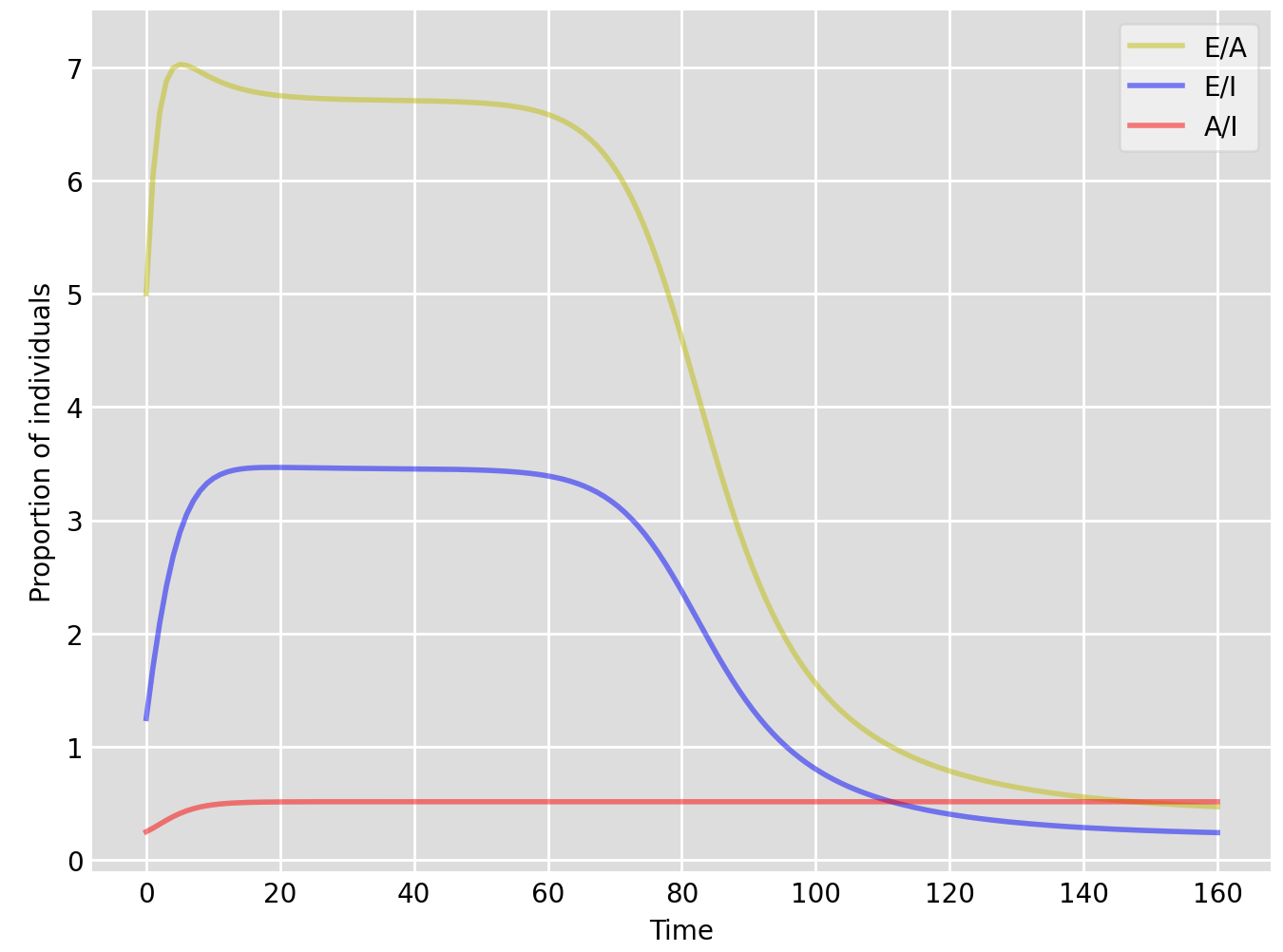}
\caption{Setup 1---model simulation results for the proportions of individuals $\frac{E(t)}{A(t)},\,\frac{E(t)}{I(t)}$ and $\frac{A(t)}{I(t)}$. \label{Setup1-proportions}}
\end{figure}

\begin{figure}[H]
\includegraphics[width=13.5 cm]{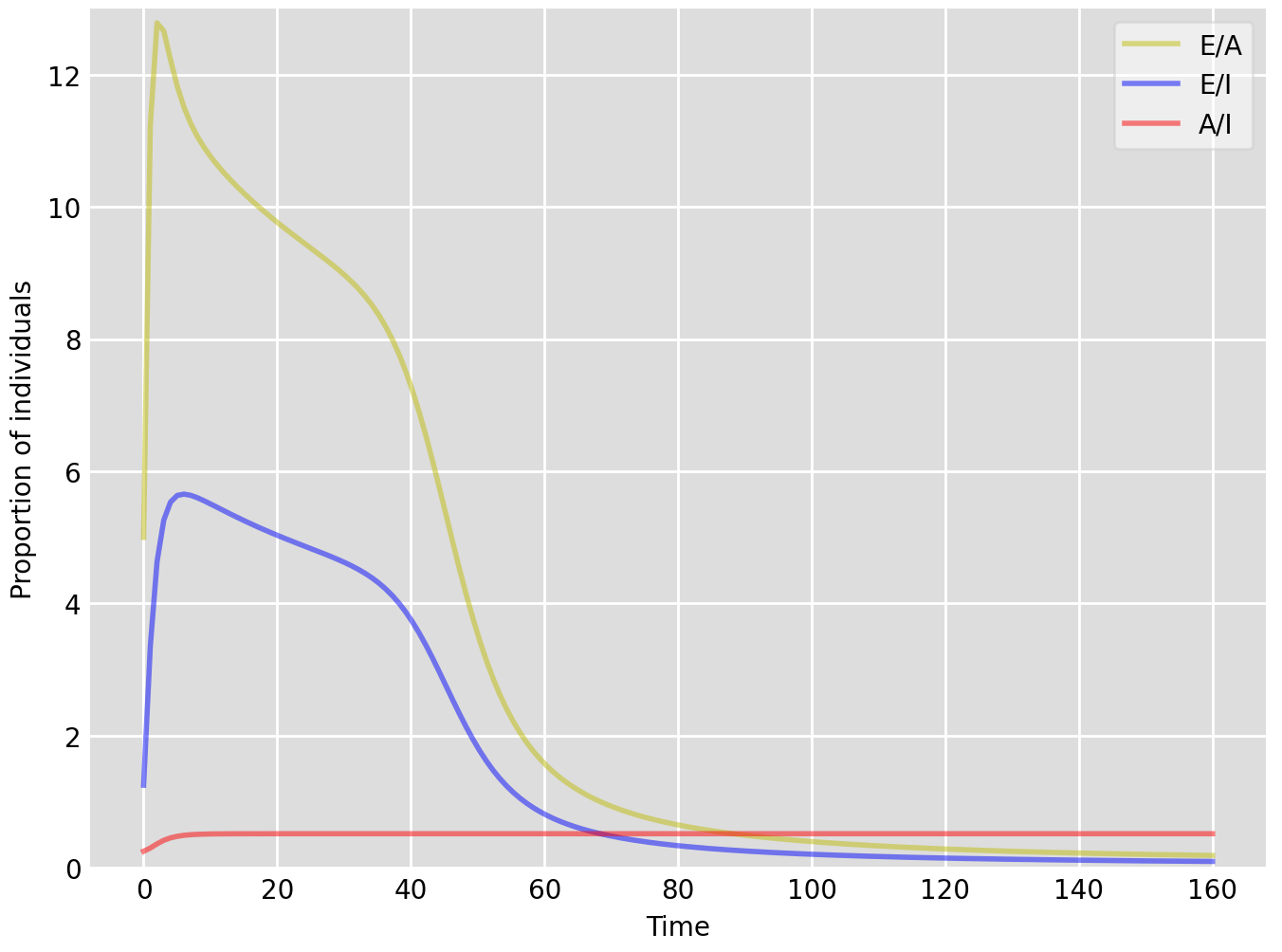}
\caption{Setup 2---model simulation results for the proportions of individuals $\frac{E(t)}{A(t)},\,\frac{E(t)}{I(t)}$ and $\frac{A(t)}{I(t)}$. \label{Setup2-proportions}}
\end{figure}

\begin{figure}[H]
\includegraphics[width=13.5 cm]{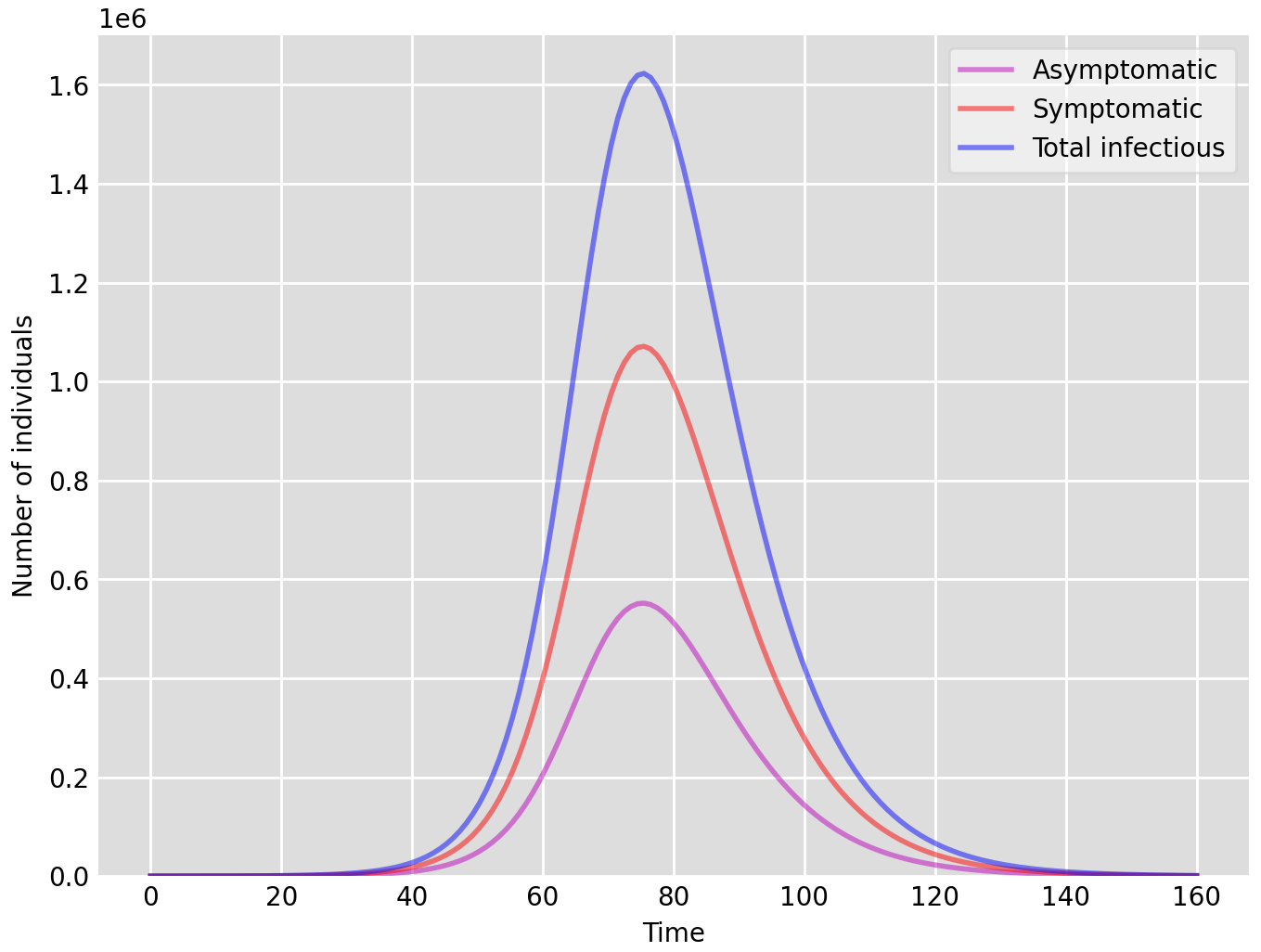}
\caption{Setup 1 reduced $S(0)$ - model simulation results for the total number of individuals $A(t),I(t)$  and $A(t)+I(t)$. \label{fig-setup1reduced-total}}
\end{figure}

\begin{figure}[H]
\includegraphics[width=13.5 cm]{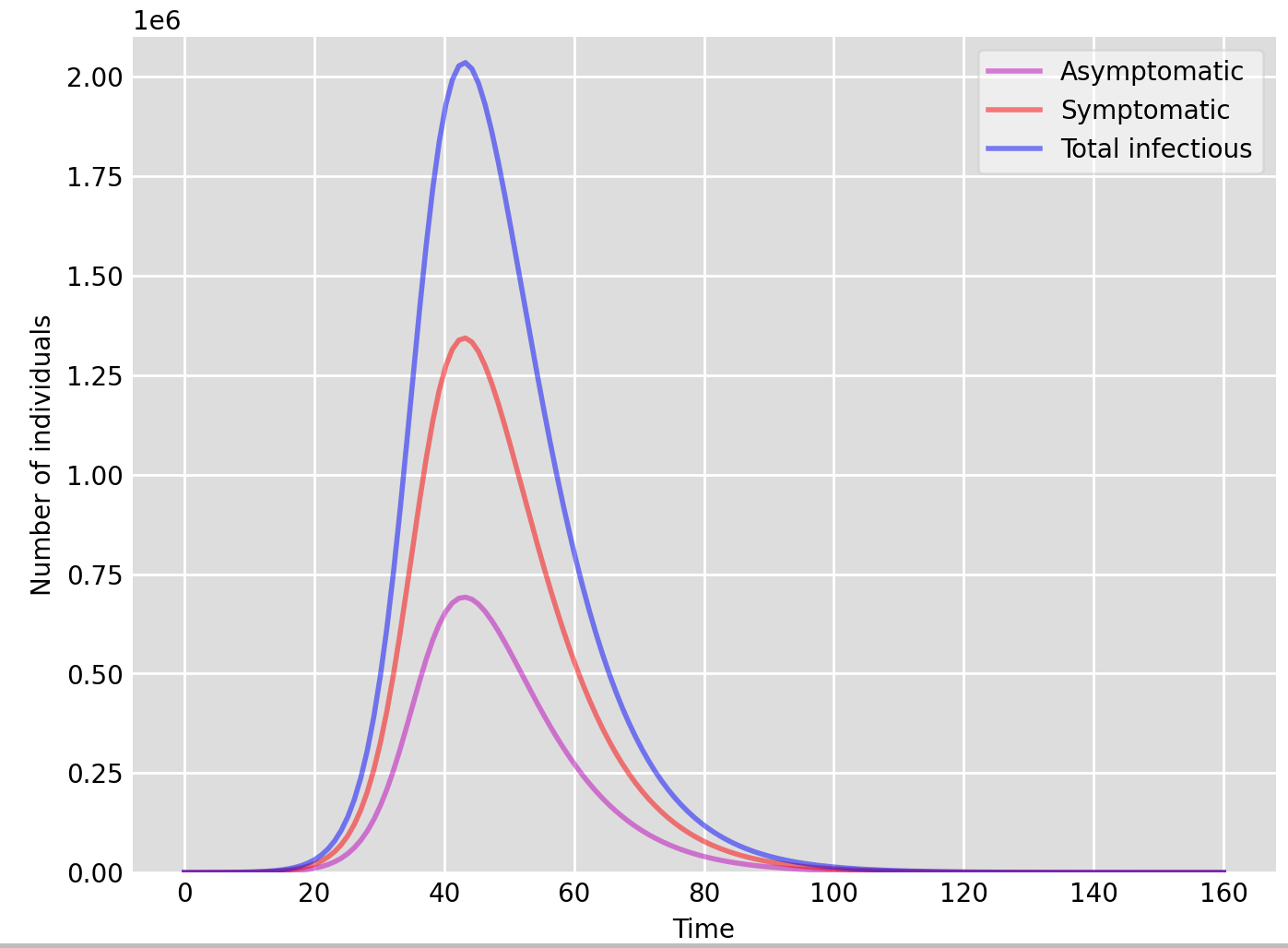}
\caption{Setup 2 reduced $S(0)$ - model simulation results for the total number of individuals $A(t),I(t)$  and $A(t)+I(t)$. \label{fig-setup2reduced-total}}
\end{figure}   

\section{Discussion on $\mathcal{R}_0$---The Basic Reproductive Number}

From a mathematical point of view, a very important feature of the model we proposed here is the lack of continuous differentiability when $I=0$, for $0<\kappa<1$. However, it is clear from the modelling point of view that this is not a practical problem, as we are not interested in solutions starting with $I=0$, when uniqueness may be lost. On the other hand, the lack of continuous differentiability for $0<\kappa<1$ also means that the model cannot be linearised around the infection-free steady state and, hence, the basic reproductive number $\mathcal{R}_0$ cannot be computed following the established next generation method, see \cite{net-rep} for example, for these values of $\kappa$. The problem is that these approaches, see also \cite{Watmough}, try to define and relate the basic reproductive number to the local asymptotic stability of the disease-free steady state, which is somewhat questionable in the first place. It is clear that, as everyone can see in the COVID-19 pandemic, it is meaningless to talk about convergence to a steady state, and the reproductive number changes at every single time instance. Indeed due to the mathematical definition of $\mathcal{R}_0$, e.g., in \cite{net-rep,Watmough}, $\mathcal{R}_0$ only gives you some information at time equals zero, when the disease was introduced into a (completely) susceptible (and well-mixed) population. We refer the interested reader to \cite{Farkas2018}, where a new approach was proposed to define basic reproductive functions for nonlinear models without linearisation. The approach in \cite{Farkas2018} (see also \cite{Farkas2014}) utilises a transformation of the nonlinear problem into a family of linear ones, and makes use of a spectral theoretic result found for example in \cite{Thieme}.

However, for the sake of interest, using the approach from \cite{Watmough}, let us deduce a formula for the basic reproductive number (or rather function, see later) for the case of $\kappa=1$, i.e., when no self-quarantine of symptomatic infectious individuals takes place, and thus our model resembles a classic model. Following \cite{Watmough}, we rearrange the components of the solution vector as $(E(t),A(t),I(t),S(t),R(t))^t$, giving $m=3$ (i.e., the first three components contain infected individuals); and then we have (using the exact same notation as in \cite{Watmough}) 
\begin{equation}\label{operators}
\mathcal{F}=\begin{pmatrix}
\beta_A\,SA+\beta_I\,SI \\
0\\
0\\
0\\
0
\end{pmatrix}, \quad
\mathcal{V}=
\begin{pmatrix}
\alpha\,E \\
-(1-p)\alpha\, E+\gamma_A\, A \\
-\alpha p\, E+\gamma_I\, I \\
\beta_A\, SA+\beta_I\, SI \\
-\gamma_I\, I-\gamma_A\, A
\end{pmatrix}.
\end{equation}

$\mathcal{F}$ above describes the infection process (the recruitment of exposed individuals), while $\mathcal{V}$ describes the transitioning between compartments.
We then compute the derivatives at $(0,0,0,S_*,R_*)^t$ (disease free steady states) as follows ($m=3$):
\begin{equation}\label{operators2}
F=\begin{pmatrix}
0 & \beta_A\, S_* & \beta_I\, S_* \\
0 & 0 & 0 \\
0 & 0 & 0
\end{pmatrix},\quad
V=\begin{pmatrix}
\alpha & 0 & 0\\
-(1-p)\alpha & \gamma_A & 0 \\
-\alpha p & 0 & \gamma_I
\end{pmatrix}.
\end{equation}

The inverse of $V$ is computed as
\begin{equation}\label{operators3}
V^{-1}=
\begin{pmatrix}
\frac{1}{\alpha} & 0 & 0\\
\frac{1-p}{\gamma_A} & \frac{1}{\gamma_A} & 0 \\
\frac{p}{\gamma_I} & 0 & \frac{1}{\gamma_I}
\end{pmatrix},
\end{equation}
which then yields
\begin{equation}\label{operators4}
F\, V^{-1}=
\begin{pmatrix}
\frac{\beta_A\,S_*(1-p)}{\gamma_A}+\frac{\beta_I\, S_* p}{\gamma_I} & \frac{\beta_A\, S_*}{\gamma_A} & \frac{\beta_I\, S_*}{\gamma_I} \\
0 & 0 & 0 \\
0 & 0 & 0 
\end{pmatrix}.
\end{equation}

Hence, the spectral radius of $F\, V^{-1}$ is simply

\begin{equation}\label{netrep}
\mathcal{R}_0=S_*\left(\frac{\beta_A(1-p)}{\gamma_A}+\frac{\beta_Ip}{\gamma_I}\right).
\end{equation}

As we can see from the formula above, $\mathcal{R}_0$ actually can be understood as a function $\mathcal{R}_0(S_*)$ of the susceptible population size $S_*$. Recall that our model \eqref{model} has a continuum family of disease-free steady states of the form $(S_*,0,0,0,R_*)$, where $S_*+R_*=N$ (the total population size). Hence, naturally, for each of these steady states, we get a different value $\mathcal{R}_0(S_*)$.  For example, using the parameter values used in Setup 1, the values for $\mathcal{R}_0(S_*)$ range from $21.28$ for $S_*=N$ to, e.g., $2.128$, corresponding to Setup 1---reduced $S$ ($90\%$ reduction in the susceptible population size due to national lockdown).  There is a bifurcation point on the line of steady states at $S_*=\frac{1}{21.28}N$ (when $\mathcal{R}_0(S_*)=1$), meaning that disease-free steady states $(S_*,0,0,0,R_*)$ are locally asymptotically stable when \mbox{$S_*<\frac{1}{21.28}N\iff \mathcal{R}_0(S_*)<1$,} and those steady states for which $S_*>\frac{1}{21.28}N\iff \mathcal{R}_0(S_*)>1$ are unstable. This qualitative property of model \eqref{model} (again, only for $\kappa=1$) not surprisingly, is somewhat similar to that of the classic Kermack--McKendrick model. The practical interpretation of this result is that according to our model, and using the parameter values as in Setup 1, the pandemic could have been avoided theoretically (without quarantining, i.e., $\kappa=1$) by isolating $\frac{20.28}{21.28}\%\approx 95.3\% $ of the susceptible population (or in other words, an effective susceptible population size of $\approx 4.7\%\,N$), a percentage clearly hard to achieve.

\section{Outlook}

Our goal here was to introduce a basic compartmental mathematical model, incorporating a new infection term, to model (short term, horizontal) disease transmission dynamics, by focusing on two key aspect: the role of asymptomatic infectious individuals and (self)-quarantine. From a mathematical point of view, our model is interesting, as, for example, the basic reproduction number cannot be computed for certain parameter values following established mathematical approaches. We parametrised our model using studies from the COVID-19 literature. However, we do not want to overstate our conclusions for COVID-19 as we only performed a limited number of simulations, and there are significant challenges and unknowns around data collection (for example test data) and analysis; important issues on which we did not focus here.

There are various natural modifications and extensions of our model, some of which we mention here briefly. For example, it is clear that all of the model parameters can be made explicitly time-dependent. It could be interesting to replace $\kappa$ with a time-dependent function $\kappa(t)$, to allow us to model the effects of COVID-19 fatigue, that is weakening adherence to self-quarantine of symptomatic infectious individuals. This would be particularly relevant when focusing on subsequent waves. 
If we want to look beyond the first wave, then naturally we may introduce and study the impact of waning immunity. From the modelling perspective, this would mean individuals re-entering the $S$ compartment from the $R$ compartment. The introduction of time-dependent model parameters would, in principle, allow to model long-term disease dynamics and allow for periodic waves of outbreaks. Naturally acquired COVID-19 immunity (e.g., via infection) can be incorporated by simply introducing a constant (less than 1) multiplying the parameter values $p$ and $1-p$. This would be important when modelling subsequent waves as a number of people have acquired immunity through the first wave. It could be interesting to compare simulation outputs for infectious individual numbers to that of test data during the 2020/2021 winter wave, when mass testing was available and in the UK we have seen 10--15 times the number of daily positive tests compared to the first wave, despite that, a significant proportion of the population may have acquired immunity during the first wave.

\end{document}